\DeclareFontFamily{OT1}{rsfs}{} \DeclareFontShape{OT1}{rsfs}{m}{n}{
<-7> rsfs5 <7-10> rsfs7 <10-> rsfs10}{}
\DeclareMathAlphabet{\mycal}{OT1}{rsfs}{m}{n}
\newtheorem{Theorem}{Theorem}
\newtheorem*{Hypothesis}{Hypothesis}
\newcommand{\rd}{{\rm d}} % exterior differential
\begin{document}
\title{\bf Towards uniqueness of degenerate axially symmetric Killing horizon}

\author{Jacek Jezierski\thanks{E-mail: Jacek.Jezierski@fuw.edu.pl} \ and
Bartek Kami\'nski\thanks{E-mail: fizyk20@gmail.com}\\
Department of Mathematical Methods in Physics, \\
Faculty of Physics, University of Warsaw,\\
ul. Ho\.za 74, 00-682 Warsaw, Poland}
%\date{PACS numbers: 02.40.Ky, 04.20.Jb, 04.40.Nr, 04.70.Bw}
%\date{}
\maketitle

\begin{abstract}
For near horizon geometry we examine the linearized equations around extremal Kerr horizon (which is a unique axially symmetric near horizon geometry) and give some arguments towards
stability of this horizon with respect to generic (non-symmetric) linear perturbation
of near horizon geometry. The result is also applicable for other situations like Kundt's class spacetimes or
isolated horizons.
\end{abstract}

\section{Introduction}
Let us consider
the following {\em basic equation} on a two-dimensional compact manifold
%$B$:
\begin{equation}
\label{basic}
  \omega_{A||B} +\omega_{B||A} +2\omega_A \omega_B = R_{AB} \, ,
\end{equation}
where $\omega_A \rd x^A$ is a covector field, $||$ denotes
covariant derivative compatible with the metric $g_{AB}$, and
$R_{AB}$ is its Ricci tensor. The equation (\ref{basic})
is a starting point of our considerations and it is a special case of
(3.7) in \cite{ABL}, if we assume that ${\tilde S}_{AB}$ vanishes.
See also \cite{CRT} or \cite{Lucietti}.

 Some geometric consequences of the
 {\it basic equation}\footnote{This is an equation describing the so-called near horizon geometries
 see \cite{nearhorizon}.} (resulting from Einstein equations)
 were discussed in \cite{JJcqg26}.
This is a non-linear PDE for unknown covector field and unknown
Riemannian structure on the two-dimensional manifold.
It appears in the context of
Kundt's class metrics (cf. \cite{Kundt}), degenerate Killing horizons
\cite{CRT}, \cite{Lucietti}, or vacuum degenerate isolated horizons \cite{ABL},
\cite{LP}, \cite{LPJJ}.
Several important results are already proved, like
topological rigidity of the horizon and integrability conditions
(cf. \cite{JJcqg26}).
Moreover, when the
one-form $\omega_B\rd x^B$ is closed (e.g. static degenerate horizon
\cite{CRT}) there are no solutions of (\ref{basic}).
The transformation (\ref{omPhi}) of a covector $\omega_A$ leads
to (partially) linear problem (invented in \cite{JJcqg26}) and simplifies the proof of
the uniqueness of extremal Kerr for axially symmetric horizon.
However, the problem of the existence of non-symmetric solutions
to the {\it basic equation} remains open. The solutions of this equation enables one to construct near horizon metric
(cf. \cite{CRT}, \cite{Lucietti}, \cite{nearhorizon}), Kundt's class spacetime or isolated horizon neighborhood.
%If the proposed Hypothesis is true than it gives

In \cite{JJcqg26} the following results were proved:
\begin{Theorem}\label{main}
For any Riemannian metric $g_{AB}$
on a two-dimensional, compact, connected manifold $B$
with no boundary and genus $\mathbf{g}\geq 2$ there are
no solutions of basic equation.
\end{Theorem}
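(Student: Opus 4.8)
The plan is to convert the tensorial \emph{basic equation} into a single scalar identity by taking its metric trace, and then to integrate that identity over the closed surface $B$, letting the topology enter through the Gauss--Bonnet theorem. The whole argument is an integral balance, so the hypothesis on the genus should surface as a sign condition at the very end.

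First I would use the fact that on a two-dimensional manifold the Ricci tensor is pure trace, $R_{AB} = \tfrac{1}{2} R\, g_{AB}$, where $R = g^{AB}R_{AB}$ is the scalar curvature. Contracting the basic equation (\ref{basic}) with $g^{AB}$ then collapses the left-hand side: the two symmetrized covariant-derivative terms each contribute the divergence of $\omega$, and the quadratic term contributes $2|\omega|^2$. This yields the scalar equation
\[
  2\,\omega^{A}{}_{||A} + 2|\omega|^2 = R ,
\]
with $|\omega|^2 = g^{AB}\omega_A\omega_B \geq 0$. The point of the trace is precisely that it isolates a manifestly non-negative bulk term together with a divergence whose integral will vanish.

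Next I would integrate this identity against the Riemannian volume element over $B$. Since $B$ is compact and has no boundary, the divergence term integrates to zero by the divergence theorem, leaving
\[
  2\int_B |\omega|^2 \,\rd A \;=\; \int_B R \,\rd A .
\]
The right-hand side is purely topological: by Gauss--Bonnet, $\int_B R\,\rd A = 4\pi\chi(B) = 8\pi(1-\mathbf{g})$.

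The conclusion is then immediate. For genus $\mathbf{g}\geq 2$ the right-hand side is strictly negative, whereas the left-hand side is non-negative, a contradiction; hence no covector field $\omega_A$ solving (\ref{basic}) can exist. I do not anticipate a genuine obstacle: the only delicate point is the sign bookkeeping that singles out $\mathbf{g}\geq 2$. Indeed, the same computation shows the hypothesis is sharp, since for the torus ($\mathbf{g}=1$, $\chi=0$) the identity forces $\omega\equiv 0$ and a flat metric, while the spherical case ($\mathbf{g}=0$) gives a positive right-hand side and so escapes the argument entirely.
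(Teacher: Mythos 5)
Your proof is correct and follows essentially the same route as the paper's (the proof it cites from \cite{JJcqg26}): the scalar identity you obtain by tracing is exactly equation (\ref{divom1}), $\omega^A{}_{||A}=K-\omega^A\omega_A$, and integrating it over the closed surface and invoking Gauss--Bonnet produces the sign contradiction for genus $\mathbf{g}\geq 2$. Your closing remark that the same balance forces $\omega\equiv 0$ and a flat metric on the torus is likewise precisely Theorem~\ref{torus}.
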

\begin{Theorem}
\label{torus}
For any Riemannian metric $g_{AB}$
on a two-dimensional torus
 equation (\ref{basic}) possesses only trivial
solutions $\omega^A\equiv 0 \equiv K$ and the metric $g_{AB}$
is flat.
\end{Theorem}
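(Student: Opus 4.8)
The plan is to collapse the tensorial basic equation into a single scalar identity by taking its trace, and then to integrate that identity over the torus. In two dimensions the Ricci tensor is pure trace, $R_{AB}=K g_{AB}$ with $K$ the Gaussian curvature, so contracting \eqref{basic} with $g^{AB}$ folds all the geometry into one equation. The symmetric derivative term gives twice the divergence, $g^{AB}(\omega_{A||B}+\omega_{B||A})=2\,\omega^B{}_{||B}$, the quadratic term gives $2|\omega|^2=2g^{AB}\omega_A\omega_B$, and the right-hand side gives $2K$. Dividing by two, I would obtain
\[
  \omega^A{}_{||A} + |\omega|^2 = K .
\]

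Next I would integrate this scalar equation over $B$ against the Riemannian area element $\rd A$. Since $B$ is closed (compact, no boundary), the divergence integrates to zero by Stokes' theorem, so $\int_B \omega^A{}_{||A}\,\rd A = 0$. The curvature integral is pinned down by Gauss--Bonnet, $\int_B K\,\rd A = 2\pi\chi(B)$, and for the torus $\chi(B)=0$. Hence the integrated identity reduces to $\int_B |\omega|^2\,\rd A = 0$.

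Because $g_{AB}$ is Riemannian, the integrand $|\omega|^2\geq 0$ pointwise, so a nonnegative continuous function with vanishing integral must be identically zero; therefore $\omega\equiv 0$ on $B$. Feeding this back into the scalar identity gives $K\equiv 0$, whence the metric is flat, since in two dimensions vanishing Gaussian curvature is equivalent to flatness. This reproduces exactly the trivial solution $\omega^A\equiv 0\equiv K$ claimed in the statement.

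I do not expect a serious obstacle: the argument hinges on the fortunate fact that tracing produces a perfect divergence plus a manifestly nonnegative term, against which Gauss--Bonnet balances the curvature integral. The only points demanding care are the normalization relating the scalar curvature to $K$ in two dimensions and the positive-definiteness of $g$ that makes $|\omega|^2$ genuinely nonnegative. It is worth remarking that the same trace-and-integrate step also explains Theorem \ref{main}: for genus $\mathbf{g}\geq 2$ one has $\chi(B)<0$, so the identity would force $\int_B |\omega|^2\,\rd A = 2\pi\chi(B) < 0$, which is impossible, and hence no solutions can exist at all.
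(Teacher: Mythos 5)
Your proof is correct and is essentially the paper's own approach: the scalar identity you obtain by tracing is precisely the trace equation (\ref{divom1}) recorded in the paper, and the theorem (whose proof the paper defers to \cite{JJcqg26}) rests on exactly this trace-plus-Gauss--Bonnet integration argument, with $\chi(B)=0$ forcing $\int_B\|\omega\|^2=0$ and hence $\omega\equiv 0$, $K\equiv 0$. Your closing observation that the same computation yields Theorem \ref{main} for genus $\mathbf{g}\geq 2$ (since then $\chi(B)<0$) is likewise correct.
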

\begin{Theorem}\label{tw3}
There are no solutions of equation (\ref{basic})
with the following properties:
\begin{itemize}
\item $\omega^A=0$ only at finite set of points,
\item $B$ is a sphere with non-negative Gaussian curvature.
\end{itemize}
\end{Theorem}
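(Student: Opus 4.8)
The plan is to start from the only scalar consequence of the basic equation that survives on a surface. In two dimensions $R_{AB}=Kg_{AB}$ with $K$ the Gaussian curvature, so tracing (\ref{basic}) with $g^{AB}$ gives the single equation $\nabla^A\omega_A+\omega^A\omega_A=K$, that is $\mathrm{div}\,\omega+|\omega|^2=K$. Integrating over the closed surface and using Gauss--Bonnet, the divergence drops out and $\int_B|\omega|^2=\int_B K=2\pi\chi(B)$. For genus $\mathbf{g}\ge2$ the right-hand side is negative, which already yields Theorem \ref{main}, and for the torus it is zero, forcing $\omega\equiv0$ as in Theorem \ref{torus}; on the sphere, however, one only obtains $\int_B|\omega|^2=4\pi>0$, so $\omega$ cannot vanish identically and the trace alone is not enough. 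This is precisely why the two extra hypotheses---$K\ge0$ and isolated zeros---are imposed, and the proof must extract information from the \emph{trace-free} part of (\ref{basic}) as well.

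First I would record the local picture at a zero $p$ of $\omega$. There $\omega_A\omega_B$ vanishes, so (\ref{basic}) collapses to $\omega_{A||B}+\omega_{B||A}=K(p)g_{AB}$; the symmetric part of the Jacobian of $\omega$ at $p$ is therefore the pure trace $\tfrac12 K(p)g_{AB}$, while its antisymmetric part is $\tfrac12(\mathrm{curl}\,\omega)(p)\,\epsilon_{AB}$. Hence $\det(\nabla\omega)|_p=\tfrac14\big(K(p)^2+(\mathrm{curl}\,\omega)^2(p)\big)\ge0$, so every nondegenerate zero has index $+1$. By Poincar\'e--Hopf the indices sum to $\chi(S^2)=2$, which is perfectly consistent with, say, two such zeros; thus the topological count constrains but does not by itself exclude a solution. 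The sign hypothesis $K\ge0$ must therefore be used analytically, not merely topologically.

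The decisive step is the transformation of (\ref{basic}) into a \emph{linear} problem, as in \cite{JJcqg26}. The model to keep in mind is the closed (gradient) case $\omega=\rd\ln\psi$: setting $\psi=e^{f}$ turns the symmetric equation into $\psi_{||AB}=\tfrac12 K\psi\,g_{AB}$, whose trace is the linear equation $\dtwo\psi=K\psi$ for the positive function $\psi$. Integrating over the sphere gives $0=\int_B\dtwo\psi=\int_B K\psi$; since $K\ge0$ and $\psi>0$ the integrand is nonnegative, so $K\psi\equiv0$, hence $K\equiv0$, contradicting $\int_B K=4\pi$. For general (non-closed) $\omega$ I would use the Hodge decomposition $\omega=\rd f+{\star}\,\rd g$, available because $H^1(S^2)=0$, and introduce the corresponding potential so that the trace-free part of (\ref{basic})---which in an isothermal chart $g=e^{2\lambda}|\rd z|^2$ is the Riccati equation $\partial_z\omega_z+\omega_z^2=2(\partial_z\lambda)\,\omega_z$---is linearized by $\omega_z=\partial_z\ln\psi$. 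The aim is to reach again a linear elliptic equation of the form $\dtwo\psi=K\psi$, modulo terms that integrate to zero against the natural weight, for a positive $\psi$, and then to run the identical integration-against-a-positive-solution argument, using $K\ge0$ to force $K\equiv0$ and contradict Gauss--Bonnet.

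The hard part will be the globalisation of this linearisation for non-closed $\omega$. Two issues must be controlled. First, the potential produced from $\rd f+{\star}\rd g$ is naturally complex---its modulus is the closed-part potential and its phase the stream function of the coclosed part---so one has to show that the correct real, positive density $\psi$ solving a single linear equation $\dtwo\psi=K\psi$ exists globally rather than only locally. Second, the substitution $\omega_z=\partial_z\ln\psi$ is singular exactly at the finitely many zeros of $\omega$, so one must verify that $\psi$ extends regularly and stays positive across them; here the index-$+1$ normal form from the second paragraph is what should guarantee the correct removable behaviour at the zero set. Once a positive global solution $\psi$ of the linear equation is in hand the sign argument closes immediately, so the entire weight of the proof rests on constructing $\psi$ and checking its positivity and regularity at the zeros.
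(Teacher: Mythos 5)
Your first two paragraphs are fine (the trace--Gauss--Bonnet argument for Theorems \ref{main} and \ref{torus}, and the local normal form $\nabla_B\omega_A\big|_p=f\varepsilon_{AB}+\tfrac12 K g_{AB}$ at a zero), but the heart of your proposal -- paragraphs three and four -- is a plan rather than a proof, and the plan fails at exactly the point you flag. Write $\omega=\rd f+\star\,\rd g$ globally on $S^2$ (possible since $H^1(S^2)=0$); your complex potential is $\psi=e^{f-ig}$, so the only positive function it provides is $|\psi|=e^{f}$. The trace of (\ref{basic}) gives $\dtwo f=K-\|\omega\|^2$ with $\|\omega\|^2=|\rd f|^2+|\rd g|^2+2\langle \rd f,\star\rd g\rangle$, hence
\[
\dtwo e^{f}=e^{f}\left(K-|\rd g|^2-2\langle \rd f,\star\rd g\rangle\right).
\]
Integrating over the sphere, the cross term drops out by Stokes ($e^{f}\rd f\wedge\rd g=\rd(e^{f})\wedge\rd g$ is exact), but the coclosed term does \emph{not}: one gets
\[
\int_{S^2}K\,e^{f}\ =\ \int_{S^2}e^{f}\,|\rd g|^2\ \geq\ 0\,,
\]
which is perfectly consistent with $K\geq 0$. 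So the ``terms that integrate to zero against the natural weight'' in fact survive integration, and with the wrong sign; since the closed case ($g=\mathrm{const}$) is already excluded by \cite{CRT}, the coclosed part is the entire problem, and your scheme collapses precisely there. There is also a structural reason no such global integral identity can work: the hypothesis $K\geq0$ must be used \emph{at the zeros} of $\omega$. For extremal Kerr (\ref{gKerr}) one has $K=\frac{2}{m^2}\frac{1-3x^2}{(1+x^2)^3}$, negative exactly in neighbourhoods of the two zeros of $\omega$; an argument that only integrates the trace against a positive weight cannot distinguish this solution from a hypothetical one with $K\geq0$.

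The actual proof (the theorem is quoted from \cite{JJcqg26}, but the mechanism is laid out in the subsections on the transformation to the linear problem and on the two zeros of $\omega$) uses the \emph{reciprocal} substitution $\Phi_A=\omega_A/\|\omega\|^2$, not the logarithmic one: the trace-free part of (\ref{basic}) is equivalent to the linear system (\ref{rotPhi})--(\ref{divPhi}), $\Phi_{A||C}\varepsilon^{AC}=0$, $\Phi^A{}_{||A}=1$, away from the zeros. Integrating $\Phi^A{}_{||A}=1$ over $S^2$ minus small disks around the zeros, the fluxes of $\Phi$ out of the zeros must sum to $-\int\lambda<0$ (this is the statement $c_1+c_2=\int\lambda$ accompanying (\ref{divPhi1})). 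On the other hand, your own paragraph-two computation evaluates these fluxes: at a nondegenerate zero $x_i$, $\omega_A\approx(f\varepsilon_{AB}+\tfrac12 K g_{AB})x^B$ gives
\[
\lim_{\epsilon\to 0}\oint_{|x|=\epsilon}\frac{\omega_A n^A}{\|\omega\|^2}\,\rd s
=\frac{\pi K(x_i)}{f^2(x_i)+\tfrac14K^2(x_i)}\ \geq\ 0
\quad\mbox{whenever}\quad K(x_i)\geq 0\,,
\]
a contradiction. (Consistency check: for Kerr, $c_i=4\pi m^2>0$ in (\ref{divPhiKerr}) precisely because $K=-1/(2m^2)<0$ at the poles.) So the local normal form you derived is indeed the key input, but it must be used metrically -- to compute the flux of $\omega/\|\omega\|^2$ -- rather than only topologically via Poincar\'e--Hopf; and the linearization that globalizes is $\Phi=\omega/\|\omega\|^2$, not $\omega_z=\partial_z\ln\psi$. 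One further caveat applies to both your sketch and the flux argument: isolated \emph{degenerate} zeros (index $\geq2$, where $f$ and $K$ both vanish at the point) need a separate treatment, which your appeal to the ``index-$+1$ normal form'' does not supply.
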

The symmetric part of $\omega_{A||B}$ is controlled by the equation but
$f:=\frac12\omega_{A||B}\varepsilon^{AB}$ is an unknown function on a sphere.
We have
\begin{equation}\label{basic2}
\omega_{A||B}= f\varepsilon_{AB}+\frac12 K g_{AB} -\omega_A\omega_B \, .
\end{equation}
%where $f:=\frac12\omega_{A||B}\varepsilon^{AB}$ is an unknown function on a sphere.
{\color{red} The integrability condition:
\begin{equation}\label{laplasjanR}
\frac14 R^{||A}{_A} +\frac32(R\omega^A)_{||A}=
 6f^2+ \frac38 R ( R-8\omega_A\omega^A )
\end{equation}
 implies that there exists
non-empty open subset, where $8\omega_A\omega^A > R > 0$.}

In this paper we
analyze a linear perturbation of extremal Kerr solution.
More precisely, in Section 2 we perform linearization of eq. (\ref{basic}) around extremal Kerr solution (\ref{gKerr}).
Axial symmetry of the background solution gives possibility to decompose linearized solution into Fourier series.
Each Fourier mode $v_k$ fulfills ordinary differential equation (\ref{eqn:final}). Using functional analysis methods we prove (in Appendix C) that there are no regular solutions for $|k|>8$. We hope to check numerically the nonexistence of low modes for $|k|\leq 8$.

Moreover, in this Section below we give some new results like equivalent formulations of the full nonlinear problem (cf.
Theorem \ref{t4}), equivalence between (\ref{basic2}) and (\ref{basicPhi}) or some properties contained in formulae
at the end of Section 1.

Finally, some nontrivial calculations are shifted to the Appendix which also contains some useful formulae.

\subsection{Transformation to linear problem}

Let us denote
\begin{equation}\label{omPhi} \Phi_A: = \frac{\omega_A}{\omega^B\omega_B} \,. \end{equation}
For any domain, where $\omega^B\omega_B > 0$,
equation (\ref{basic}) implies
\begin{equation}\label{rotPhi} \Phi_{A||C}\varepsilon^{AC}
= \left(\frac{\omega_A}{\omega^B\omega_B}\right)_{||C}
\varepsilon^{AC} = 0 \end{equation}
which simply means that the one-form $\Phi_A\rd x^A$ is closed,
and locally there exists a coordinate $\Phi$ such that
\[ \rd\Phi= \Phi_A\rd x^A \, . \]
Moreover, from (\ref{basic}) we get
\begin{equation}\label{divPhi} \Phi^A{_{||A}} = 1 \end{equation}
hence the potential $\Phi$ is a solution of the Poisson's equation:
\begin{equation}\label{lapPhi}
 \triangle\Phi = 1 \, .
\end{equation}
\underline{Remark} If we choose one isolated point, where $\omega$
vanishes, then for a given metric $g$ we have unique solution of the
above Laplace-Beltrami equation (Green's function in the enlarged
sense cf. \cite{RS-JMP47}). For more isolated points we can take linear combination of such
solutions. More precisely, let $G_{x_0}$ be a unique solution (for a given
metric $g$) of the equation (\ref{lapPhi}) on $S^2-\{x_0\}$.
If $c_0+ c_1+ \ldots + c_n = 1$ (where $c_i \in {\mathbb R}$)
then $\Phi= c_0 G_{x_0}+ c_1 G_{x_1}+ \ldots + c_n G_{x_n}$ is
a solution of (\ref{lapPhi}) on $S^2-\{ x_0, x_1, \ldots ,x_n \}$, and
$\omega$ vanishes at the points $x_0, x_1, \ldots, x_n$.

\subsection{Two zeros of $\omega$}
Suppose $\omega_A$ vanishes at two distinct points in a generic way
(i.e. $\omega_{A||B}$ is non-degenerate at those points).
Then the equations (\ref{divPhi}) and (\ref{rotPhi}) extend (in the sense of distributions) as follows:
\begin{equation}\label{divPhi1} \Phi^A{_{||A}} = 1 -c_1{\boldsymbol\delta}_{\theta=\pi}-c_2 {\boldsymbol\delta}_{\theta=0}\end{equation}
\begin{equation}\label{rotPhi1} \Phi_{A||C}\varepsilon^{AC}= d_1{\boldsymbol\delta}_{\theta=\pi} -
d_2 {\boldsymbol\delta}_{\theta=0}
%= \left(\frac{\omega_A}{\omega^B\omega_B}\right)_{||C}\varepsilon^{AC} = 0
\end{equation}
Integration of the above equations on $S^2$ implies $d_1=d_2=d$ and $c_1+c_2=\int \lambda=$(total volume of $S^2$).
Hence, for $\Phi_A=\partial_A\Phi+\varepsilon_A{^B}\partial_B\tilde\Phi$ the potentials $\Phi, \tilde\Phi$
fulfill Laplace equations:
\begin{equation}\label{LaPhi1} \triangle \Phi = 1 -c_1{\boldsymbol\delta}_{\theta=\pi}-c_2 {\boldsymbol\delta}_{\theta=0}\end{equation}
\begin{equation}\label{LaPhi2} \triangle \tilde\Phi = d_1{\boldsymbol\delta}_{\theta=\pi} -
d_2 {\boldsymbol\delta}_{\theta=0}\end{equation}
and their solutions may be
expressed in terms of generalized Green's functions on $S^2$ which are well defined as the distributions
(they are integrable functions, smooth outside poles with log divergence at poles).

Moreover, the trace of (\ref{basic})
\begin{equation}
\label{divom1}
  \omega^A{_{||A}} = K - \omega^A \omega_A
\end{equation}
may be expressed
in terms of $\Phi^A$ as follows:
\begin{equation}\label{PhiK1}
\partial_A\left(\frac{\lambda\Phi^A}{\Phi^B\Phi_B}\right)+
\frac{\lambda}{\Phi^B\Phi_B}-\lambda K =0 \quad \equiv \quad \frac2{\|\Phi\|}\partial_A\left(\frac{\lambda\Phi^A}{\|\Phi\|}\right) %+%\frac{\lambda}{\Phi^B\Phi_B}
=\lambda K \, .
\end{equation}

\subsection{Equivalent form of the basic equation in terms of the covector $\Phi_A$ and its conformal rescaling}
Equations (\ref{rotPhi}) and (\ref{divPhi}) together with (\ref{PhiK1}) written as follows:
%\marginpar{\tiny Te trzy równania wystarczą tzn. można z nich odtworzyć wyjściowe równanie (\ref{basic2}).}
\begin{eqnarray}\label{rotPhi2} \lambda\varepsilon^{AC}\partial_C\Phi_{A}
%= \left(\frac{\omega_A}{\omega^B\omega_B}\right)_{||C}\varepsilon^{AC}
&=& 0\, , %\end{equation}
\\\label{fil} \partial_A\left( \lambda g^{AB}\Phi_B \right) &=& \lambda\, , \\
%\begin{equation}
\label{PhiK2}
\partial_A\left(\frac{\lambda g^{AB}\Phi_B}{g^{CD}\Phi_C\Phi_D}\right)+
\frac{\lambda}{g^{CD}\Phi_C\Phi_D}-\lambda K &=& 0 \, ,
\end{eqnarray}
for the conformally equivalent metric $h_{AB}=\exp(-2u)g_{AB}$ (cf. eq. (\ref{gKerr1})) are almost the same
%dla metryki $g_{AB}=\exp(2u)h_{AB}$
%niewiele się zmieniają jeśli wrócić do metryki odniesienia $h$ (cf. eq. (\ref{gKerr1})).
\[ (\lambda K)(h)-(\lambda K)(g)= \lambda_h \triangle_h u = \lambda_g \triangle_g u \, , \quad -(\lambda K)(h)= \frac12 a^2{_{,xx}} \, ,\]
\[ \partial_A\left( \lambda_h h^{AB}\Phi_B \right)=\lambda_h \exp(2u) \, , \quad \lambda\varepsilon^{AC}\partial_C\Phi_{A}=0 \, , \]
\[ \partial_A\left(\frac{\lambda_h h^{AB}\Phi_B \exp(2u)}{h^{CD}\Phi_C\Phi_D}\right)+
\frac{\lambda \exp(4u)}{h^{CD}\Phi_C\Phi_D}-\lambda_h K_h +\lambda_h \triangle_h u =0 \, .\]
Moreover, we have the following
\begin{Theorem}\label{t4}
Equations (\ref{rotPhi2}--\ref{PhiK2}) are locally equivalent to the eq. (\ref{basic2}) in the domain, where $\omega_A=
\frac{\Phi_A}{\Phi^B\Phi_B}$ is not vanishing.
\end{Theorem}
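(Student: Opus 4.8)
The plan is to prove the two systems equivalent component by component, exploiting the involutive nature of the substitution $\omega_A\leftrightarrow\Phi_A$; because each correspondence below is an equivalence, both directions (the one already sketched in the \emph{Transformation to linear problem} subsection, and its converse) are handled at once. First I would record that on the stated domain the correspondence is an involution: from $\Phi_A=\omega_A/(\omega^B\omega_B)$ one computes $\Phi^C\Phi_C=1/(\omega^B\omega_B)$, whence $\omega_A=\Phi_A/(\Phi^C\Phi_C)$ and $\omega^A=\Phi^A/(\Phi^C\Phi_C)$. In particular $\omega_A\neq 0$ holds precisely where $\Phi_A$ is finite and nonzero, so both descriptions refer to the same open set. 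Writing $W:=\omega^B\omega_B$, it then suffices to match each of (\ref{rotPhi2}), (\ref{fil}), (\ref{PhiK2}) with exactly one of the three independent components of (\ref{basic2}).

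Second, I would pass to covariant form. Using $V^A{}_{||A}=\lambda^{-1}\partial_A(\lambda V^A)$ and the fact that $\lambda\varepsilon^{AB}$ is the Levi-Civita symbol, (\ref{rotPhi2}) says that $\Phi_A\rd x^A$ is closed, (\ref{fil}) becomes $\Phi^A{}_{||A}=1$, and (\ref{PhiK2})---after the substitutions $\lambda g^{AB}\Phi_B/(g^{CD}\Phi_C\Phi_D)=\lambda\omega^A$ and $\lambda/(g^{CD}\Phi_C\Phi_D)=\lambda W$---collapses to $\omega^A{}_{||A}=K-W$, that is, the trace (\ref{divom1}) of the basic equation. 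The claim is thereby reduced to a purely local tensorial statement: closedness of $\Phi$ together with $\Phi^A{}_{||A}=1$ and $\omega^A{}_{||A}=K-W$ is equivalent to $\omega_{A||B}=f\varepsilon_{AB}+\tfrac12Kg_{AB}-\omega_A\omega_B$.

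Third, the core step is a frame decomposition of the symmetric part $S_{AB}:=\tfrac12(\omega_{A||B}+\omega_{B||A})$. On $W>0$ the covectors $\omega_A$ and $\hat\omega_A:=\varepsilon_A{}^B\omega_B$ form an orthogonal frame ($\hat\omega^A\hat\omega_A=W$, $\omega^A\hat\omega_A=0$), so $S_{AB}$ is fixed by the three scalars $S_{\omega\omega}$, $S_{\hat\omega\hat\omega}$, $S_{\omega\hat\omega}$. Differentiating $\Phi_A=\omega_A/W$ and using $W_{||B}=2\omega^C\omega_{C||B}$, I would extract the antisymmetric and trace parts of $\Phi_{A||B}$ and find: (i) $\varepsilon^{AB}\Phi_{A||B}$ is proportional to $S_{\omega\hat\omega}$, so closedness of $\Phi$ is equivalent to $S_{\omega\hat\omega}=0$; (ii) $\Phi^A{}_{||A}=W^{-1}\omega^A{}_{||A}-2W^{-2}S_{\omega\omega}$, so $\Phi^A{}_{||A}=1$ combined with $\omega^A{}_{||A}=g^{AB}S_{AB}=W^{-1}(S_{\omega\omega}+S_{\hat\omega\hat\omega})=K-W$ forces $S_{\omega\omega}=W(\tfrac12K-W)$ and hence $S_{\hat\omega\hat\omega}=\tfrac12KW$. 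These are exactly the frame components of $\tfrac12Kg_{AB}-\omega_A\omega_B$, and since $f$ is by definition the unconstrained antisymmetric part of $\omega_{A||B}$, we recover (\ref{basic2}). As every step is an equivalence, the two systems coincide on the nonvanishing domain.

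The step I expect to be the main obstacle is the cancellation in (i). Writing $\omega_{C||B}=S_{CB}+f\varepsilon_{CB}$, the $f$-dependent piece of $\varepsilon^{AB}\Phi_{A||B}$ coming from the antisymmetric part of $\omega_{C||B}$ (through the term $-2W^{-2}\omega_A\omega^C\omega_{C||B}\varepsilon^{AB}$) evaluates, via $\varepsilon^{AB}\varepsilon_{CB}=\delta^A_C$, to $-2f/W$ and must cancel exactly against $W^{-1}\varepsilon^{AB}\omega_{A||B}=2f/W$, leaving a multiple of $S_{\omega\hat\omega}$ alone. Verifying this cancellation---and thereby confirming that the three scalar equations control three genuinely independent frame components, so that the matching $3=3$ of degrees of freedom is actually realized rather than merely counted, and that the still-free function $f$ is not secretly constrained---is where the real content of the equivalence lies.
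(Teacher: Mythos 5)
Your proposal is correct and follows essentially the same route as the paper: both reduce the system (\ref{rotPhi2})--(\ref{PhiK2}) to the curl, divergence, and trace statements, and then observe that these three scalar conditions determine the three independent components of the symmetric part of $\omega_{A||B}$ while leaving $f$ free --- your frame components $S_{\omega\omega}$, $S_{\hat\omega\hat\omega}$, $S_{\omega\hat\omega}$ are precisely the paper's trace $\tau$ and its two contractions $\omega^A\omega^B\tau_{AB}$, $\varepsilon^{AB}\omega_A\omega^C\tau_{CB}$ in different notation. Your explicit verification of the $f$-cancellation in $\varepsilon^{AB}\Phi_{A||B}$ fills in a computation the paper leaves implicit, but it does not change the structure of the argument.
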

\begin{proof}
Let us represent tensor $\omega_{A||B}$ as a sum of three parts: skewsymmetric ($f$), traceless symmetric ($\tau_{AB}$)
and trace ($\tau$):
\begin{equation}\label{deromega}
\omega_{A||B}= f\varepsilon_{AB}+ \tau_{AB} + \tau g_{AB} \, .
\end{equation}
We have to show that $\tau_{AB}$ and $\tau$ are determined by eq. (\ref{rotPhi2}--\ref{PhiK2}).
It is easy to check that (\ref{PhiK2}) implies $2\tau=K-\|\omega\|^2=\omega^A{_{||A}}$.
Moreover, (\ref{rotPhi2}) gives
\[ \varepsilon^{AB}\omega_A\omega^C\tau_{CB}=0 \]
and similarly (\ref{fil}) implies
\[ 2\omega^A\omega^B\tau_{AB}=-\|\omega\|^4 \, .\]
Let us observe that any two-dimensional traceless symmetric tensor has only two independent components, hence
the last two conditions determine $\tau_{AB}$ uniquely in the following form:
\[ \tau_{AB}=-\omega_A\omega_B +\frac12 g_{AB} \|\omega\|^2 \, .\]
Finally, the above formula together with
$\tau=\frac12 (K-\|\omega\|^2)$ give the eq. (\ref{basic2}).
\end{proof}
One can also check the following formula:
\begin{equation}\label{basicPhi} \Phi_{A||B}= 0\cdot\varepsilon_{AB} +\frac12 g_{AB} - f (\ast\Phi_A\Phi_B+\ast\Phi_B\Phi_A) +
(1-K\|\Phi\|^2)\left(\frac{\Phi_A\Phi_B}{\|\Phi\|^2} -\frac12 g_{AB}\right)
\end{equation}
which is equivalent to (\ref{basic2}) but in terms of $\Phi$.

Let us observe that $\Phi^B{_{||BA}}=0$ hence
the symmetry of the tensor $\Phi_{A||B}$ implies
\[ \Phi_A{^{||B}}_B=\Phi^B{_{||AB}}=\Phi^B{_{||AB}}-\Phi^B{_{||BA}}=R_{AB}\Phi^B \, ,
\]
and we obtain the following nice formulae:
\begin{equation}\label{lplphi}
    \Phi^{A||B}{_B}=K \Phi^A \, , \quad \ast\Phi^{A||B}{_B}=K {\ast\Phi}^A
    \, .
\end{equation}
Moreover,
\[ \Phi^{A||B}\omega_{A||B}= K-\|\omega\|^2 \]
and %\marginpar{\tiny jak się porówna z (\ref{divom1}) to widać, że człon brzegowy nie znika w zerach $\omega$.}
\[ (\Phi^{A||B}\omega_A )_{||B} = \Phi_A{^{||B}}_B\omega^A +  \Phi^{A||B}\omega_{A||B}= 2K-\|\omega\|^2 \, ,\]
\[ \lim_{\epsilon\rightarrow 0^+} \int_{\partial S_\epsilon}\Phi^{A||B}\omega_A \rd S_B =\int_{S^2} K =4\pi \quad
\mbox{where} \; S_\epsilon := S^2 \setminus \left(\bigcup_{x_i\in \, \omega^{-1}(\{0\})} K(x_i, \epsilon)\right) \, . \]

\section{Linearization of basic equation around extremal Kerr}
After introducing a new coordinate $x:=\cos\theta$
the (two-dimensional) extremal Kerr (see \cite{JJcqg26})
\begin{equation}\label{gKerr}
g_{\mbox{\tiny\rm Kerr}}=2m^2 \left[ \frac{1+\cos^2\theta}2\rd\theta^2
+ \frac{2\sin^2\theta}{1+\cos^2\theta}\rd\phi^2 \right] \, ,\end{equation}
\begin{equation}\label{omKerr}
 \omega^\theta=-\frac{\sin\theta\cos\theta}{m^2(1+\cos^2\theta)^2} \, ,\quad
   \omega^\varphi=\frac{1}{2m^2(1+\cos^2\theta)} \, ,
\end{equation}
takes the following form:
\begin{equation}\label{gKerr1}
g_{\mbox{\tiny\rm Kerr}}=h_{AB}\rd x^A\rd x^B
=2m^2\left(a^{-2} \rd x^2 +a^{2}\rd\varphi^2\right)
\, , \end{equation}
 where $\displaystyle a^2:= 2\frac{1-x^2}{1+x^2}$ and
$\displaystyle\lambda:=\sqrt{\det h_{AB}}=2m^2$.
 The components of various objects for Kerr are the following:
\[ \omega_x=\frac{x}{1+x^2} \, ,\quad
   \omega_\varphi=\frac{a^2}{1+x^2} \, , \quad
 \|\omega\|^2=\frac1{2m^2}\frac{a^2}{1+x^2} \, .\]
\begin{equation}\label{phi1}
    \frac1{2m^2}\Phi=\frac{x}{a^2}\rd x +\rd\varphi \; , \quad
    \frac1{2m^2}\ast\Phi=\frac{1}{a^2}\rd x -x\rd\varphi \quad \left(\ast\Phi_A:=\varepsilon_A{^B}\Phi_B\right)\, ,
\end{equation}
\begin{equation}
 K=\frac2{m^2}\frac{1-3x^2}{(1+x^2)^3} \; , \quad
    f=\frac1{m^2}\frac{x(1+a^2)}{(1+x^2)^2}=\frac1{m^2}\frac{x(3-x^2)}{(1+x^2)^3}\; ,
    \quad \frac{K}2 +if=\frac1{m^2(1-ix)^3} \, ,
\end{equation}
\begin{equation}\label{phi2}
    \|\Phi\| = \| \ast\Phi \| \; , \quad
    \ast\Phi \wedge \Phi = \|\Phi\|^2\lambda\rd x\wedge\rd\varphi  \, .
\end{equation}

The nearby metric $g$ we describe by conformal factor:
\begin{equation}\label{defu}
    g_{AB}=\exp(2u)h_{AB}
\end{equation}
and we get
\[ \Gamma^C{_{AB}}(g)=\Gamma^C{_{AB}}(h)+S^C{_{AB}} \, , \]
\[ S^C{_{AB}}=\delta^C_A\partial_B u + \delta^C_B\partial_A u
    - h_{AB}h^{CD}\partial_D u \, .\]
Let us denote by $u^B:= h^{BA}\partial_A u$ the gradient of $u$ with respect to the metric $h$.
We have
\begin{eqnarray}\label{omegaab}
 \nabla_B(g)\omega_A &=& \nabla_B(h)\omega_A-S^C{_{AB}}(u)\omega_C\\
 & =& \nabla_B(h)\omega_A + h_{AB}\omega_C u^C -\omega_A u_B -\omega_B u_A
 \end{eqnarray}
Moreover, the Gaussian curvatures $K_h$ and $K_g$ for
 the conformally related metrics $h$ and $g$ respectively are related as follows
\[ \triangle_h u = K_h-\exp(2u)K_g \, .\]
This gives the following transformation for the right-hand side of (\ref{basic}):
\begin{equation}\label{RAB} R_{AB}(g)=K_g g_{AB}=\left(K_h- \triangle_h u\right)h_{AB} \, .\end{equation}
Using (\ref{omegaab}) and (\ref{RAB}) we rewrite basic equation (\ref{basic}) as follows:
\begin{equation}
\nabla_B(h)\omega_A +\nabla_A(h)\omega_B +
2\left(h_{AB}\omega_C u^C - \omega_A u_B - \omega_B u_A + \omega_A\omega_B\right)=
\left(K_h- \triangle_h u\right)h_{AB} \, .
\end{equation}
Let us denote the linear part of the covector $\omega$ by
\[ {\mathsf w}_A := \omega_A - \omega^{\mbox{\tiny\rm Kerr}}_A \, .\]
 Now we are ready to linearize basic equation.
\begin{eqnarray} \nonumber
& & 2\left( \omega^{\mbox{\tiny\rm Kerr}}_A {\mathsf w}_B + \omega^{\mbox{\tiny\rm Kerr}}_B {\mathsf w}_A
+h_{AB}\omega^{\mbox{\tiny\rm Kerr}}_C u^C -\omega^{\mbox{\tiny\rm Kerr}}_A u_B
-\omega^{\mbox{\tiny\rm Kerr}}_B u_A \right) + \nabla_B(h){\mathsf w}_A +\nabla_A(h){\mathsf w}_B  \\ & & + h_{AB}\triangle_h u  =
2{\mathsf w}_A u_B+2{\mathsf w}_B u_A - 2h_{AB}{\mathsf w}_C u^C -2 {\mathsf w}_A {\mathsf w}_B \approx 0
\end{eqnarray}

Finally,
for covector ${\mathsf w}_A$ and conformal factor $u$
in (\ref{defu})
the linearization of (\ref{basic}) takes the following form:
\begin{equation}\label{divw}
\nabla_A ({\mathsf w}^A+u^A)+2\omega^A {\mathsf w}_A=0 \, , %\quad u_A:= \partial_A u
\end{equation}
\begin{equation}\label{TSw}
TS\left(\nabla_A {\mathsf w}_B + 2\omega_A({\mathsf w}_B-u_B)\right)=0 \, ,
\end{equation}
where now $\omega$ and $\nabla$ are background objects
(corresponding to the Kerr solution (\ref{gKerr1})), and
%$u_A:= \partial_A u$ (gradient of $u$) and
\[ TS(t_{AB}):= t_{AB}+t_{BA}-h_{AB}h^{CD}t_{CD} \]
denotes the traceless symmetric part of the tensor $t_{AB}$.

We show in Appendix \ref{redukcja} that after elimination of $u_A$ we get:
\begin{equation}\label{rotw}
\triangle_h ({\mathsf w}_A{\ast\Phi}^A)+\varepsilon^{AB} {\mathsf w}_{A||B}=0 \, ,
\end{equation}
\begin{equation}\label{laplw}
\triangle_h ({\mathsf w}_A\Phi^A)+ 4 {\mathsf w}_A\Phi^A \|\omega\|^2 +3 {\mathsf w}^A{_{||A}}=0 \, ,
\end{equation}
where
\begin{equation}
\label{eqn:u}
u_A = \frac{1}{2}\left[{\mathsf w}_A + \nabla_B(\Phi^B {\mathsf w}_A - \Phi_A {\mathsf w}^B) + \nabla_A(\Phi^B {\mathsf w}_B) \right] \, .
\end{equation}

\underline{Remark:} The equations (\ref{rotw}--\ref{laplw}) are
{\em conformally} covariant with respect to the rescaling of the two-metric $h$.
More precisely, the form of these equations is the same for two conformally
related metrics provided that $\Phi$, $\ast\Phi$ are vector fields
and $\mathsf w$ and $\omega$ are covector fields.
One can easily verify this observation multiplying the above equations
by scalar density $\lambda$.

The non-existence of the solution ${\mathsf w}_A$ to the equations
(\ref{rotw}--\ref{laplw}) is equivalent to the
stability of the solution (\ref{gKerr}--\ref{omKerr}).

Axial symmetry of the background solution
enables one to separate variable $\varphi$
with the help of Fourier transform and (\ref{rotw}-\ref{laplw}) becomes
second order ODE for \[ {\mathsf w}: [-1,1] \mapsto {\mathbb R}^2 \, . \]

\noindent One can also introduce another pair of variables: %Let us denote
\begin{equation}\nonumber
%\begin{split}
\alpha :=  \varepsilon^{AB} {\mathsf w}_A \Phi_B = \frac{1}{2} \left[ 2{\mathsf w}_x - \frac{x(1+x^2)}{1-x^2}{\mathsf w}_\varphi \right] =
  {\mathsf w}_x - \frac{x(1+x^2)}{2(1-x^2)} {\mathsf w}_\varphi = {\mathsf w}_x -\frac{x}{a^2}{\mathsf w}_\varphi \, ,
%\end{split}
\end{equation}
\begin{equation}\nonumber
%\begin{split}
\beta :=  \Phi_A {\mathsf w}^A = m^2 \left[ \frac{x(1+x^2)}{1-x^2}{\mathsf w}^x + 2{\mathsf w}^\varphi \right]
=  x {\mathsf w}_x + \frac{1+x^2}{2(1-x^2)}{\mathsf w}_\varphi = x{\mathsf w}_x +\frac{1}{a^2}{\mathsf w}_\varphi \, ,
%\end{split}
\end{equation}
\noindent where $\displaystyle a^2 := 2\frac{1-x^2}{1+x^2}$. The formula (\ref{eqn:u}) takes a simple form:
\begin{equation}
\label{uab}
u_A = \frac{1}{2}\left[{\mathsf w}_A + \varepsilon_A{^B} \nabla_B \alpha + \nabla_A \beta \right] \, .
\end{equation}
%\noindent Transformacja odwrotna:
%\[ {\mathsf w}_x = \frac{\alpha + x\beta}{1+x^2}\]
%\[ {\mathsf w}_\phi = 2(\beta - x\alpha)\frac{1-x^2}{(1+x^2)^2} = a^2 \frac{\beta-x\alpha}{1+x^2}\]
%One can also introduce another pair of variables:
%\begin{equation}\label{newab}
%    \alpha:= {\mathsf w}_x -\frac{x}{a^2}{\mathsf w}_\varphi \, , \quad
%    \beta:= x{\mathsf w}_x +\frac{1}{a^2}{\mathsf w}_\varphi \, .
%\end{equation}

Moreover, the inverse transformation
\begin{equation}\label{newabis}
     {\mathsf w}_x =\frac{\alpha+x\beta}{1+x^2} \, , \quad
    {\mathsf w}_\varphi = a^2\frac{\beta-x\alpha}{1+x^2} \, ,
\end{equation}
implies the following form of the equations (\ref{rotw}--\ref{laplw})
in terms of variables $\alpha$, $\beta$:
\begin{equation}\label{rotwab}
\triangle_h (\alpha)+\partial_\varphi\left(\frac{\alpha+x\beta}{1+x^2}\right)
-\partial_x\left(a^2\frac{\beta-x\alpha}{1+x^2}\right)=0 \, ,
\end{equation}
\begin{equation}\label{divwab}
\triangle_h (\beta)+ \frac{4 a^2}{1+x^2}\beta +
3 \partial_\varphi\left(\frac{\beta-x\alpha}{1+x^2}\right)+
3\partial_x\left(a^2\frac{\alpha+x\beta}{1+x^2}\right)
=0 \, ,
\end{equation}
where
\[ \triangle_h := \partial_x a^2 \partial_x +
\partial_\varphi a^{-2} \partial_\varphi \, .\]

Let us denote $v:=\left[\begin{array}{c} \alpha \\ \beta \end{array} \right]$,
$\displaystyle B:=\frac1{1+x^2}\begin{bmatrix}
x & -1 \\ 3 & 3x
\end{bmatrix}$, $\displaystyle C:=\frac1{1+x^2}\begin{bmatrix}
1 & x \\ -3x & 3
\end{bmatrix}$, then the equations (\ref{rotwab}--\ref{divwab})
take the following (matrix) form:
\begin{equation}\label{matrixform}
\triangle_h v + \frac{4 a^2}{1+x^2} {\begin{bmatrix}
0 & 0 \\ 0 & 1\end{bmatrix}}v +
\partial_x\left(a^2 B v\right) +
\partial_\varphi\left(C v\right)=0 \, .
\end{equation}
Other useful identities:
\[ {\mathsf w}_A=\beta\omega_A+\alpha\varepsilon_A{^B}\omega_B=\partial_A (2u-\beta)-\varepsilon_A{^B}\partial_B\alpha \, ,\]
\[ 2\partial_A u=\partial_A\beta +\beta\omega_A +\varepsilon_A{^B}(\partial_B\alpha+\alpha\omega_B) \, . \]

\subsection{Boundary data}
A small perturbation of Kerr data (\ref{gKerr}--\ref{omKerr}) does not destroy the number of two zeros for covector $\omega_A$.
This is a simple consequence of the ``inverse function theorem''. More precisely, the non-vanishing curvature
in the neighborhood of ``spherical pole'' (zero of $\omega_A$) assures invertibility of the first derivative
$ \nabla_A \omega_B$ in a small open neighborhood\footnote{Formula (\ref{basic2}) implies that
$\det  \nabla_A \omega^B=f^2+\frac K2(\frac K2 - \|\omega\|^2)$, for Kerr
$\det  \nabla_A \omega^B=\frac{2x^2(3-x^2)}{(1+x^2)^5}$ and it vanishes only on the equator $x=0$.} and implies existence of a local diffeomorphism
$\omega_A(x^B)$. Hence, for perturbed $\omega_A(x^B)$ there exists
(in a small open neighborhood of spherical pole) precisely one point, where $\omega_A$ vanishes.
The freedom of global conformal transformations enables one to introduce ``new conformal coordinates''
in such a way that the spherical poles are always at the points where $\omega_A$ vanishes.
Hence, we can always assume that the perturbed $\omega_A$ vanishes at spherical poles which implies
zero (homogeneous) boundary data for linear perturbation ${\mathsf w}_A$ or equivalently for $v=(\alpha, \beta)$\footnote{It is not obvious that ${\mathsf w}_A=0$ corresponds to $v=0$ and it is not true for $k=0$.}.

One can also show that respectively chosen conformal vector field $X$ enables one to change ${\mathsf w}_A \rightarrow {\mathsf w}_A +{\cal L}_X\omega_A$ in such a way that it will vanish at a given point (see appendix \ref{cvf}).

%{\tt Czy na $\omega$ nie można popatrzeć globalnie jako na odwzorowanie ciągłe \\ $S^2 \mapsto {\mathbb R}^2$?
%Wtedy istnieją punkty antypodalne, w których wartość $\omega$ jest taka sama.
%Ponadto dwa zera plus globalne maximum funkcji $\|\omega\|^2$ usta\-la\-ją kanoniczny układ współrzędnych.}
%$\|\omega\|^2(p_{max})=\frac12 K(p_{max})$, $f(p_{max})=0$, bo z r-nia (\ref{basic2}) mamy
%$\partial_B\|\omega\|^2=2f\omega^A\varepsilon_{AB}+(K-2\|\omega\|^2)\omega_B$.
\begin{Hypothesis}\label{stability}
The equation (\ref{matrixform}) has no regular solutions for homogeneous boundary data ${{\mathsf w}_A}{\big|_{x=1}}=0={{\mathsf w}_A}{\big|_{x=-1}}$.
\end{Hypothesis}
\noindent
%\begin{proof}
%Rozważmy rozkład $v$ w szereg Fouriera:
{\em Proof attempt}.
Let us consider Fourier series for $v$:
\[v(x,\phi) = \sum_{k=-\infty}^{\infty} v_k(x) e^{ik\phi}\]
%Prowadzi to do równania zwyczajnego na
It leads to ODE for $v_k(x)$:
\begin{equation}\label{eqn:final}
\partial_x a^2 \partial_x v_k - \frac{k^2}{a^2} v_k + \frac{4a^2}{1+x^2} \left[ \begin{array}{cc}
0 & 0 \\
0 & 1
\end{array} \right] v_k + \partial_x (a^2Bv_k) + ik(Cv_k) = 0
\end{equation}
%{\tt Policzyć część symetryczną równania i oszacować spektrum od\-po\-wia\-da\-ją\-ce\-go mu operatora symetrycznego}.\\
%Dowodzimy, że $v_k$ znika na biegunach dla $|k|\geq 1$, bo ${\mathsf w}_A$ tam znika.
We check that $v_k$ vanishes at poles for $|k|\geq 1$, because ${\mathsf w}_A$ vanishes there.
For $k=0$ we have axial symmetry, hence we already have uniqueness in full nonlinear case,
however it would be nice to check this fact independently.
%Dla $k=0$ lądujemy w symetrii aksjalnej, gdzie już mamy jednoznaczność, ale warto to sprawdzić niezależnie;

%{\tt $v_0$ nie musi znikać na brzegu (druga asymptotyka jest logarytmiczna) i nie widać dlaczego nie ma rozwiązań,
%tzn. problem jest taki sam jak dla innych modów: są dwie asymptotyki na biegunach jedna regularna, a druga rozbieżna,
%więc na ogół się nie ,,schodzą'' na obu brzegach.}

For $|k|>8$ we show in appendix C that there are no regular solutions.
%Dla $|k|>8$ metodami operatorowymi dowodzimy brak rozwiązań. zob. appendix C.
%Pozostaje do sprawdzenia, być może numerycznie kilka początkowych modów.
There are some initial numerical results which confirm nonexistence hypothesis for $|k|\leq 8$.
We are going to check numerically the existence or nonexistence of low modes. %for $|k|\leq 8$.
The results will be published in a separate paper.
%, ale prawdopodobnie $u_A$ nie będzie regularne (sprawdzić!).
%Następnie sprawdzamy indeksy regularnych punktów osobliwych $x=\pm 1$ dla równania (\ref{eqn:final}).
%Asymptotyka rozwiązań w $x=1$ to $v\sim c_1+c_2\ln (1-x)$, regularność wymusza $c_2=0$, a warunek brzegowy
%daje $c_1=0$ i jedyne regularne rozwiązanie to $v=0$. Podobnie mamy dla $x=-1$ tzn. $v\sim c_1+c_2\ln (1+x)$.
%\end{proof}
\hfill $\Box$

The above Hypothesis implies stability of the extremal Kerr horizon.
It is true for $|k|> 8$ and adding this assumption we get Theorem \ref{mainth}. More precisely, eq. (\ref{eqn:final}) has no regular solutions for $|k|> 8$.
The analytical proof is given in Appendix C.

\appendix
\section{Kerr in conformal coordinates}
The background metric (\ref{gKerr1}) can be conformally related to
unit sphere metric as follows:
\begin{equation}\label{gKerrc1}
g_{\mbox{\tiny\rm Kerr}}=h_{AB}\rd x^A\rd x^B
=2m^2\left(a^{-2} \rd x^2 +a^{2}\rd\varphi^2\right)
= 2m^2 F^2 {\tilde h}_{AB}\rd {\tilde x}^A\rd {\tilde x}^B
\end{equation} where
\[ {\tilde h}_{AB}\rd {\tilde x}^A\rd {\tilde x}^B :=
\left[ \frac{\rd{\tilde x}^2}{1-{\tilde x}^{2}}
+(1-{\tilde x}^{2})\rd\tilde\varphi^2\right]
 \, ,\quad {\tilde\varphi=\varphi}
  \, , \quad {\tilde x}= \frac{x-\tanh\frac{x}2}{1-x\tanh\frac{x}2}\]
 is the usual unit sphere metric and
\[ F^2=\frac{a^2}{1-{\tilde x}^{2}}
=\frac2{1+x^2}\left(\cosh\frac{x}2-x\sinh\frac{x}2\right)^2 \, ,
\quad \rd x=F^2\rd{\tilde x} \, .\]

\section{Reduced linearized equations}\label{redukcja}

\subsection{Elimination of $u_A$}\label{sec:elim}

%Rozpoczynamy od równania (\ref{TSw}) (części bezśladowej równania zaburzonego):
We start from traceless part (\ref{TSw}):
\begin{equation}
\begin{split}
\label{eqn:traceless2}
& \nabla_A{\mathsf w}_B + \nabla_B{\mathsf w}_A + 2\omega_A{\mathsf w}_B + 2\omega_B{\mathsf w}_A - \nabla^C{\mathsf w}_Ch_{AB} - 2\omega^C{\mathsf w}_Ch_{AB} + \\
& - 2\omega_Au_B - 2\omega_Bu_A + 2\omega^Cu_Ch_{AB} = 0\, .
\end{split}
\end{equation}
%\noindent Część bezśladowa to 4 równania, z których tylko 2 są liniowo niezależne (ze względu na symetrię i warunek braku %śladu). Rozpiszmy zatem równania
The two independent components $(AB) = (xx)$ and $(AB) = (x\phi)$ can be written as follows.
 Component $(xx)$:
\begin{equation}
\begin{split}\nonumber
& 2\nabla_x {\mathsf w}_x + 4\omega_x {\mathsf w}_x - (\nabla^x{\mathsf w}_x + \nabla^\phi {\mathsf w}_\phi)h_{xx} - 2(\omega^x{\mathsf w}_x + \omega^\phi {\mathsf w}_\phi)h_{xx} +\\
& - 4\omega_x u_x + 2 (\omega^xu_x + \omega^\phi u_\phi) h_{xx} = 0
\end{split}
\end{equation}
%\noindent Ponieważ metryka jest diagonalna, $h_{xx}\omega^x = \omega_x$ i $h_{xx}\nabla^x = \nabla_x$. Po wstawieniu tych %zależności i podzieleniu przez $h_{xx}$:
or in an equivalent form (dividing by $h_{xx}$):
\begin{equation}
\begin{split}\nonumber
& 2\nabla^x {\mathsf w}_x + 4\omega^x {\mathsf w}_x - \nabla^x{\mathsf w}_x - \nabla^\phi {\mathsf w}_\phi - 2 \omega^x{\mathsf w}_x - 2\omega^\phi {\mathsf w}_\phi %&+\\&
- 4\omega^x u_x + 2 \omega^xu_x + 2\omega^\phi u_\phi = 0 \, ,
\end{split}
\end{equation}
\begin{equation}\label{eqn:xx}
2\omega^x u_x - 2\omega^\phi u_\phi = \nabla^x {\mathsf w}_x - \nabla^\phi {\mathsf w}_\phi + 2\omega^x {\mathsf w}_x - 2\omega^\phi {\mathsf w}_\phi \, .
\end{equation}
\noindent Component $(x\phi)$:
\begin{equation}
\nonumber
\nabla_x {\mathsf w}_\phi + \nabla_\phi {\mathsf w}_x + 2\omega_x {\mathsf w}_\phi + 2\omega_\phi {\mathsf w}_x - 2\omega_x u_\phi - 2\omega_\phi u_x = 0 \, ,
\end{equation}
\begin{equation}
\label{eqn:xphi}
2\omega_\phi u_x + 2\omega_x u_\phi = \nabla_x {\mathsf w}_\phi + \nabla_\phi {\mathsf w}_x + 2\omega_x {\mathsf w}_\phi + 2\omega_\phi {\mathsf w}_x \, .
\end{equation}
%\noindent Równania (\ref{eqn:xx}) i (\ref{eqn:xphi}) można zebrać do postaci macierzowej:
Finally we have (in matrix form)
\begin{equation}\label{eqn:eqnu}
\left[\begin{array}{c c}
2\omega^x & -2\omega^\phi \\
2\omega_\phi & 2\omega_x
\end{array}\right]\left[ \begin{array}{c}
u_x \\
u_\phi
\end{array}\right] = \left[ \begin{array}{c}
\nabla^x {\mathsf w}_x - \nabla^\phi {\mathsf w}_\phi + 2\omega^x {\mathsf w}_x - 2\omega^\phi {\mathsf w}_\phi \\
\nabla_x {\mathsf w}_\phi + \nabla_\phi {\mathsf w}_x + 2\omega_x {\mathsf w}_\phi + 2\omega_\phi {\mathsf w}_x
\end{array}\right] \, .
\end{equation}
Let us denote $A := \left[\begin{array}{c c}
2\omega^x & -2\omega^\phi \\
2\omega_\phi & 2\omega_x
\end{array}\right]$. Hence
\[A^{-1} = \frac{1}{4(\omega^x\omega_x + \omega^\phi\omega_\phi)}\left[\begin{array}{c c}
2\omega_x & 2\omega^\phi \\
-2\omega_\phi & 2\omega^x
\end{array}\right] = \frac{1}{2\|\omega\|^2}\left[\begin{array}{c c}
\omega_x & \omega^\phi \\
-\omega_\phi & \omega^x
\end{array}\right] \,.\]
%\noindent Po pomnożeniu równania (\ref{eqn:eqnu}) obustronnie przez $A^{-1}$ otrzymujemy:
Multiplying by $A^{-1}$ we get
\begin{equation}
\begin{split}\nonumber
u_x = & \frac{1}{2\|\omega\|^2}(\omega_x \nabla^x {\mathsf w}_x - \omega_x \nabla^\phi {\mathsf w}_\phi + 2\omega_x \omega^x {\mathsf w}_x - 2\omega_x \omega^\phi {\mathsf w}_\phi + \\
& + \omega^\phi \nabla_x {\mathsf w}_\phi + \omega^\phi \nabla_\phi {\mathsf w}_x + 2\omega_x \omega^\phi {\mathsf w}_\phi + 2 \omega_\phi \omega^\phi {\mathsf w}_x)
\end{split}
\end{equation}
%\noindent Po uproszczeniu:
or in simpler form
\begin{equation}\label{eqn:ux}
u_x = \frac{1}{2\|\omega\|^2}(\omega^x \nabla_x {\mathsf w}_x - \omega_x \nabla^\phi {\mathsf w}_\phi + \omega^\phi \nabla_x {\mathsf w}_\phi + \omega^\phi \nabla_\phi {\mathsf w}_x + 2 \|\omega\|^2 {\mathsf w}_x) \,.
\end{equation}
Similarly, component $\phi$:
\begin{equation}
\begin{split}\nonumber
u_\phi = & \frac{1}{2\|\omega\|^2}( \omega_\phi \nabla^\phi {\mathsf w}_\phi - \omega_\phi \nabla^x {\mathsf w}_x + 2\omega_\phi \omega^\phi {\mathsf w}_\phi - 2\omega_\phi \omega^x {\mathsf w}_x +\\
& + \omega^x \nabla_x {\mathsf w}_\phi + \omega^x \nabla_\phi {\mathsf w}_x + 2\omega^x \omega_x {\mathsf w}_\phi + 2\omega^x \omega_\phi {\mathsf w}_x)
\end{split}
\end{equation}
%\noindent Po uproszczeniu:
or
\begin{equation}\label{eqn:uphi}
u_\phi = \frac{1}{2\|\omega\|^2}( \omega_\phi \nabla^\phi {\mathsf w}_\phi - \omega_\phi \nabla^x {\mathsf w}_x + \omega^x \nabla_x {\mathsf w}_\phi + \omega^x \nabla_\phi {\mathsf w}_x + 2\|\omega\|^2 {\mathsf w}_\phi) \, .
\end{equation}
Equations (\ref{eqn:ux}) and (\ref{eqn:uphi}) we can rewrite in covariant form:
\begin{equation} \nonumber
u_A = \frac{1}{2\|\omega\|^2}(\omega^B \nabla_B {\mathsf w}_A + \omega^B \nabla_A {\mathsf w}_B - \omega_A \nabla^B {\mathsf w}_B + 2\|\omega\|^2 {\mathsf w}_A)
\end{equation}
Now, introducing $\Phi_A := \frac{1}{\|\omega\|^2}\omega_A$ we have
\begin{equation}\label{eqn:u1}
u_A = {\mathsf w}_A + \frac{1}{2}(\Phi^B \nabla_B {\mathsf w}_A + \Phi^B \nabla_A {\mathsf w}_B - \Phi_A \nabla^B {\mathsf w}_B) \, .
\end{equation}
Let us notice the following
\begin{equation}\label{eqn:rozw1}
\Phi^B \nabla_B {\mathsf w}_A = \nabla_B (\Phi^B {\mathsf w}_A) - {\mathsf w}_A \nabla_B \Phi^B = \nabla_B (\Phi^B {\mathsf w}_A) - {\mathsf w}_A
\end{equation}
\noindent(from \cite{JJcqg26} we know that $\nabla_B \Phi^B = 1$),
\begin{equation}\label{eqn:rozw2}
\Phi^B \nabla_A {\mathsf w}_B = \nabla_A (\Phi^B {\mathsf w}_B) - {\mathsf w}_B \nabla_A \Phi^B = \nabla_A (\Phi^B {\mathsf w}_B) - {\mathsf w}^B \nabla_A \Phi_B \, ,
\end{equation}
%\noindent Mnożenie przez metrykę jest przemienne z różniczkowaniem kowariantnym, więc możemy podnosić/opuszczać wskaźniki po %dwóch stronach róż\-niczkowania, jak w ostatnim przejściu powyżej.
\begin{equation}\label{eqn:rozw3}
\Phi_A \nabla^B {\mathsf w}_B = \nabla^B (\Phi_A {\mathsf w}_B) - {\mathsf w}_B \nabla^B \Phi_A = \nabla_B (\Phi_A {\mathsf w}^B) - {\mathsf w}^B \nabla_B \Phi_A \, .
\end{equation}
The above equations (\ref{eqn:u1}), (\ref{eqn:rozw1}), (\ref{eqn:rozw2}) and (\ref{eqn:rozw3}) imply
\begin{equation}
\nonumber
u_A = {\mathsf w}_A + \frac{1}{2}\left[\nabla_B (\Phi^B {\mathsf w}_A - \Phi_A {\mathsf w}^B) + \nabla_A(\Phi^B {\mathsf w}_B) + {\mathsf w}^B (\nabla_B \Phi_A - \nabla_A \Phi_B) - {\mathsf w}_A \right] \, .
\end{equation}
From \cite{JJcqg26} we know that $\varepsilon^{AB}\nabla_B\Phi_A = 0$, hence $\nabla_B \Phi_A - \nabla_A \Phi_B = 0$, and we
obtain formula (\ref{eqn:u}):
%\begin{equation}\nonumber
$u_A = \frac{1}{2}\left[ {\mathsf w}_A + \nabla_A(\Phi^B {\mathsf w}_B) + \nabla_B (\Phi^B {\mathsf w}_A - \Phi_A {\mathsf w}^B) \right]$.
%\end{equation}
 %(\ref{eqn:u}), co było do udowodnienia.

\subsection{Equations for ${\mathsf w}_A$}%\label{sec:rownaniah}

\noindent The trace and curl of $u_A$ gives:
\begin{equation}
\label{eqn:trace2}
\nabla^A{\mathsf w}_A + 2\omega^A{\mathsf w}_A + \nabla^A u_A = 0 \, ,
\end{equation}
\begin{equation}
\label{eqn:rot2}
\varepsilon^{AB}\nabla_B u_A = 0
\end{equation}
\noindent Using formula
\begin{equation}
\label{eqn:u3}
u_A = \frac{1}{2}\left[ {\mathsf w}_A + \nabla_A(\Phi^B {\mathsf w}_B) + \nabla_B (\Phi^B {\mathsf w}_A - \Phi_A {\mathsf w}^B) \right]
\end{equation}
%\noindent Wstawienie zależności (\ref{eqn:u3}) do równania
and equation (\ref{eqn:trace2}) we obtain
\begin{equation}\nonumber
\nabla^A {\mathsf w}_A + 2\omega^A {\mathsf w}_A + \frac{1}{2}\nabla^A{\mathsf w}_A + \frac{1}{2}\nabla^A \nabla_A (\Phi^B {\mathsf w}_B) + \frac{1}{2}\nabla^A\nabla_B (\Phi^B {\mathsf w}_A - \Phi_A {\mathsf w}^B) = 0\, ,
\end{equation}
\begin{equation}\nonumber
\Delta (\Phi^B {\mathsf w}_B) + 3\nabla^A {\mathsf w}_A + 4\omega^A {\mathsf w}_A + \nabla^A\nabla^B(\Phi_B {\mathsf w}_A - \Phi_A {\mathsf w}_B) = 0 \, .
\end{equation}
 Moreover, $\nabla^A\nabla^B(\Phi_B {\mathsf w}_A - \Phi_A {\mathsf w}_B) = 0$,
 %gdyż dla każdego tensora kontrawariantnego drugiego rzędu $t^{AB}$ zachodzi tożsamość:
 because
\[\nabla_C\nabla_D t^{AB} - \nabla_D\nabla_C t^{AB} = R^A{_{ECD}}t^{EB} + R^B{_{ECD}}t^{AE} \, ,\]
where by $R^A{_{BCD}}$ we denote Riemann curvature tensor. We have
\[\nabla_A\nabla_B t^{AB} - \nabla_B \nabla_A t^{AB} = R_{EB}t^{EB} - R_{EA}t^{AE}\, , \]
where by $R_{AB}$ we denote Ricci tensor. The symmetry of Ricci
\[\nabla_A\nabla_B t^{AB} - \nabla_B \nabla_A t^{AB} = R_{EB}t^{EB} - R_{AE}t^{AE} = 0\]
implies
\begin{equation}\nonumber
\begin{split}
& \nabla^A\nabla^B(\Phi_B {\mathsf w}_A - \Phi_A {\mathsf w}_B) = %\\
%= &
\nabla^B\nabla^A(\Phi_B {\mathsf w}_A - \Phi_A {\mathsf w}_B) = \\
= & \nabla^A\nabla^B(\Phi_A {\mathsf w}_B - \Phi_B {\mathsf w}_A) = %\\
%= &
-\nabla^A\nabla^B(\Phi_B {\mathsf w}_A - \Phi_A {\mathsf w}_B) \, .
\end{split}
\end{equation}
Hence $\nabla^A\nabla^B(\Phi_B {\mathsf w}_A - \Phi_A {\mathsf w}_B) = 0$ and we obtain (\ref{laplw})
%co było do okazania. Stąd otrzymujemy pierwsze równanie:
\begin{equation}\nonumber
\Delta (\Phi^B {\mathsf w}_B) + 3\nabla^A {\mathsf w}_A + 4\omega^A {\mathsf w}_A = 0 \, .
\end{equation}
Using formula (\ref{eqn:u3}) and equation (\ref{eqn:rot2}) we get
%\noindent Wstawienie zależności (\ref{eqn:u3}) do równania (\ref{eqn:rot2}) daje:
\[\varepsilon^{AB}\nabla_B {\mathsf w}_A + \varepsilon^{AB}\nabla_B\nabla^C(\Phi_C{\mathsf w}_A - \Phi_A{\mathsf w}_C) + \varepsilon^{AB}\nabla_B\nabla_A(\Phi^B{\mathsf w}_B) = 0\, .\]
Vanishing torsion gives $\varepsilon^{AB}\nabla_B\nabla_A(\Phi^B{\mathsf w}_B) = 0$, hence
\[\varepsilon^{AB}\nabla_B {\mathsf w}_A + \varepsilon^{AB}\nabla_B\nabla^C(\Phi_C{\mathsf w}_A - \Phi_A{\mathsf w}_C) = 0\, .\]
Moreover, $\Phi_C{\mathsf w}_A - \Phi_A{\mathsf w}_C = \varepsilon_{AC}\varepsilon^{DE}\Phi_D {\mathsf w}_E$
 %Po podstawieniu i uwzględnieniu faktu, że $\nabla_A \varepsilon_{BC} = 0$, otrzymujemy:
 implies
\[\varepsilon^{AB}\nabla_B {\mathsf w}_A + \varepsilon^{AB}\varepsilon_{AC}\nabla_B\nabla^C(\varepsilon^{DE}\Phi_D {\mathsf w}_E) \, .\]
Using identity $\varepsilon^{AB}\varepsilon_{AC} = -\delta^B{_C}$, we get
\[\varepsilon^{AB}\nabla_B {\mathsf w}_A + \nabla_C\nabla^C (\varepsilon^{AB}\Phi_B {\mathsf w}_A) = 0 \, , \]
and finally
%\noindent Stąd otrzymujemy drugie równanie:
we obtain (\ref{rotw})
\[\Delta(\varepsilon^{AB}\Phi_B {\mathsf w}_A) + \varepsilon^{AB}\nabla_B {\mathsf w}_A = 0 \, .\]

\section{Proof for large $k$}
%Badanie stabilności ekstremalnego Kerra prowadzi do równania:
Stability for the extremal Kerr leads to the following equation:
\begin{equation}
\label{eqn:vk}
 \partial_x a^2 \partial_x v_k - \frac{k^2}{a^2} v_k + Dv_k + \partial_x (a^2 B v_k) + ikCv_k = 0 \, ,
\end{equation}
%gdzie:
where
\begin{itemize}
 \item $a^2 = 2\frac{1-x^2}{1+x^2}$
 \item $v_k : [-1,1] \to \mathbb{C}^2$ is the unknown function we are looking for, %jest szukaną funkcją
 \item $B = \frac{1}{1+x^2} \left[\begin{array}{cc}
                   x & -1 \\
                   3 & 3x
                  \end{array}\right]$,
 \item $C = \frac{1}{1+x^2} \left[\begin{array}{cc}
                   1 & x \\
                   -3x & 3
                  \end{array}\right]$,
 \item $D = \frac{4a^2}{1+x^2} \left[\begin{array}{cc}
                                       0 & 0 \\
                                       0 & 1
                                      \end{array}\right]$.
\end{itemize}
\begin{Theorem}\label{mainth}
Equation (\ref{eqn:vk}) has no solutions for $|k|>8$.
\end{Theorem}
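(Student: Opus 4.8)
The plan is to prove the theorem by a weighted energy estimate: I will show that for $|k|$ large the coercive term $-\tfrac{k^2}{a^2}v_k$ cannot be balanced by the remaining zeroth- and first-order terms, so that any regular solution obeying the homogeneous boundary data is forced to vanish.

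First I would pair equation (\ref{eqn:vk}) with the complex conjugate $\bar v_k$ in the Hermitian $\mathbb{C}^2$ inner product and integrate over $[-1,1]$. The two divergence-form terms $\partial_x(a^2\partial_x v_k)$ and $\partial_x(a^2 B v_k)$ are integrated by parts. Near the degenerate endpoints $x=\pm1$ one has $a^2\sim 2(1\mp x)$, and the indicial analysis of (\ref{eqn:vk}) there gives the regular behaviour $v_k=O\big((1-x^2)^{|k|/2}\big)$; since $|k|\ge1$ this makes the weighted integrals $\int a^2|\partial_x v_k|^2\,\rd x$ and $\int a^{-2}|v_k|^2\,\rd x$ convergent and kills every boundary term. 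The result is the integral identity
\[ -\int_{-1}^1 a^2|\partial_x v_k|^2\,\rd x-k^2\int_{-1}^1 \frac{|v_k|^2}{a^2}\,\rd x+\int_{-1}^1\bar v_k^{\,T}Dv_k\,\rd x-\int_{-1}^1 a^2(\partial_x\bar v_k)^{T}Bv_k\,\rd x+ik\int_{-1}^1\bar v_k^{\,T}Cv_k\,\rd x=0 . \]

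Next I would take the real part. The first two terms are real and nonpositive and carry the coercivity; $D$ is real and symmetric, so $\bar v_k^{T}Dv_k$ is real; and for the last two terms only the (anti)symmetric parts of $B,C$ contribute, with the short computation $C-C^{T}=\tfrac{4x}{1+x^2}\left(\begin{smallmatrix}0&1\\-1&0\end{smallmatrix}\right)$ giving $\mathrm{Re}\big(ik\,\bar v_k^{T}Cv_k\big)=-k\tfrac{4x}{1+x^2}\mathrm{Im}(\bar v_{k,1}v_{k,2})$. Moving the two nonnegative quantities to the left I obtain
\[ \int_{-1}^1 a^2|\partial_x v_k|^2\,\rd x+k^2\int_{-1}^1\frac{|v_k|^2}{a^2}\,\rd x=\int_{-1}^1\bar v_k^{T}Dv_k\,\rd x-\mathrm{Re}\int_{-1}^1 a^2(\partial_x\bar v_k)^{T}Bv_k\,\rd x-k\int_{-1}^1\frac{4x}{1+x^2}\mathrm{Im}(\bar v_{k,1}v_{k,2})\,\rd x . \]
I would then estimate the right-hand side using the pointwise bounds $a^2\le 2$ (so $a^{-2}\ge\tfrac12$ and $\int|v_k|^2\le 2\int a^{-2}|v_k|^2$), $\|D\|\le 8$, $\|B(x)\|=3(1+x^2)^{-1/2}\le 3$ and $\tfrac{4|x|}{1+x^2}\le 2$, together with Young's inequality $a^2\|B\|\,|\partial_x v_k|\,|v_k|\le\tfrac12 a^2|\partial_x v_k|^2+\tfrac12\|B\|^2 a^2|v_k|^2$ to split off the derivative factor. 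After absorbing the resulting $\tfrac12\int a^2|\partial_x v_k|^2$ into the left-hand side, everything collapses to the scalar inequality $k^2 I\le (c_0+2|k|)\,I$, where $I:=\int_{-1}^1 a^{-2}|v_k|^2\,\rd x$ and $c_0=34$ arises from $\|D\|$ and $\|B\|^2$. If $v_k\not\equiv 0$ then $I>0$, whence $k^2\le c_0+2|k|$; the relevant root of $k^2-2|k|-34$ is $1+\sqrt{35}<8$, so no nonzero regular solution can exist once $|k|>8$ (the estimate is in fact robust enough to cover $|k|\ge 7$), which is the assertion.

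The main obstacle is sharpness of the constants. The only term carrying a derivative is the first-order cross term $a^2(\partial_x\bar v_k)^{T}Bv_k$, and its contribution must be split---through the Young parameter---between the coercive $\int a^2|\partial_x v_k|^2$ and the potential energy $I$, while competing against the genuinely linear-in-$|k|$ contribution of the $C$ term; only the sharp spectral norms of $B$ and $C-C^{T}$ (rather than crude entrywise bounds, and crucially the fact that only the antisymmetric part of $C$, which carries the vanishing factor $x$, survives in the real part) bring the root of the quadratic below the stated threshold. A secondary, purely technical point is the rigorous justification of the integrations by parts at the degenerate poles, which rests entirely on the indicial estimate $v_k=O\big((1-x^2)^{|k|/2}\big)$ guaranteeing that all weighted boundary terms vanish.
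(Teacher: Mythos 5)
Your proof is correct, and at its core it is the same energy-estimate strategy as the paper's Appendix C: pair (\ref{eqn:vk}) with $\bar v_k$ over $[-1,1]$, integrate the divergence-form terms by parts, and show that the coercive quantities $\int a^2|\partial_x v_k|^2\,\rd x$ and $k^2\int a^{-2}|v_k|^2\,\rd x$ overwhelm the $B$, $C$, $D$ contributions once $|k|$ is large. The execution differs in one essential point. The paper never takes a real part: it bounds the moduli of all three pairings by Cauchy--Schwarz and sup-of-operator-norm estimates, introduces $x=\|Xv_k\|/\|v_k\|$ and $y=\|a^{-1}v_k\|/\|v_k\|\ge 1/\sqrt2$, completes the square, and reads off $|k|\le\bigl(\|aC\|+\sqrt{4\|D\|+\|aB\|^2+\|aC\|^2}\,\bigr)/\sqrt2\approx 8.3$. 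You take the real part first, which annihilates the symmetric part of $C$; the surviving antisymmetric part $C-C^{T}=\frac{4x}{1+x^2}\bigl(\begin{smallmatrix}0&1\\-1&0\end{smallmatrix}\bigr)$ has pointwise norm at most $2$, far smaller than the full $\|aC\|=3\sqrt2$ (which, after dividing by $y\ge 1/\sqrt2$, effectively costs the paper $6|k|$). That refinement is what closes your quadratic $k^2-2|k|-34\le 0$ at $1+\sqrt{35}<7$, giving non-existence already for $|k|\ge 7$ --- strictly stronger than the paper's $|k|\ge 9$; your Young splitting of the $B$ cross term (parameter $\tfrac12$) is marginally lossier than the paper's completed square, but the gain on the $C$ term dominates. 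Two further points in your favor: (i) you justify the integrations by parts at the degenerate endpoints via the indicial behaviour $v_k=O((1-x^2)^{|k|/2})$, whereas the paper tacitly treats $X^*=-\frac{\rd}{\rd x}a$ as the adjoint with no discussion of boundary contributions; (ii) your norm computations are the correct ones: since $B$ and $C$ have identical singular values, $\sup_x a\|B\|=\sup_x a\|C\|=3\sqrt2$, so the paper's stated value $\|aB\|=\sqrt6$ is a slip (it equals $\max_x\sqrt{\det(aB)}$, not the operator norm), though a harmless one, since the corrected bound $3+\sqrt{34}<9$ still yields $|k|\le 8$.
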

\begin{proof}
For functions $f,g: [-1,1] \to \mathbb{C}^2$ let us define a standard scalar product: %Zdefiniujmy iloczyn skalarny na przestrzeni takich funkcji jako:
\[ (f|g) = \int\limits_{-1}^{1} \bar{f}^T g \, \rd x\]
Let us consider an operator $X: = a\frac{d}{dx}$ and its hermitian conjugate $X^* = -\frac{d}{dx}a$.
The eq. (\ref{eqn:vk}) takes the form:
%Przy jego pomocy równanie (\ref{eqn:vk}) można zapisać jako:
\[X^* X v_k + \frac{k^2}{a^2} v_k + X^*(aBv_k) - ikCv_k - Dv_k = 0\]
The left-hand side we denote by $Lv_k$, where $L$ is a linear operator and $v_k \in \ker L$, i.e. $Lv_k = 0$.
%Oznaczając lewą stronę przez $Lv_k$, gdzie $L$ - operator liniowy, mamy $Lv_k = 0$.
For $(v_k|Lv_k)$ we have:
\[ 0 = (v_k|Lv_k) = \|Xv_k\|^2 + k^2 \left\|\frac{1}{a}v_k\right\| + (Xv_k|aBv_k) - ik\left(\frac{1}{a}v_k|aCv\right) - (v_k|Dv_k) \, . \]
 Introducing real numbers $x := \frac{\|Xv_k\|}{\|v_k\|}$, $y := \frac{\|\frac{1}{a}v_k\|}{\|v_k\|}$
 we obtain:
 %Wówczas, po przeniesieniu kilku wyrazów na drugą stronę, otrzymamy:
\[x^2 \|v_k\|^2 + k^2 y^2 \|v_k\|^2 = -(Xv_k|aBv_k) + ik\left(\frac{1}{a}v_k|aCv_k\right) + (v_k|Dv_k) \, , \]
%Lewa strona jest dodatnia, zatem wzięcie modułu jej nie zmieni. Natomiast moduł prawej strony możemy oszacować:
and absolute value one can estimate as follows:
\[x^2 \|v_k\|^2 + k^2 y^2 \|v_k\|^2 \leq |(Xv_k|aBv_k)| + |k| \left|\left(\frac{1}{a}v_k|aCv_k\right)\right| + |(v_k|Dv_k)| \, .\]
%Z nierówności Cauchy'ego-Schwarza:
From Cauchy-Schwarz inequality
\[x^2 \|v_k\|^2 + k^2 y^2 \|v_k\|^2 \leq x \|v_k\| \|aBv_k\| + |k|y\|v_k\|\|aCv_k\| + \|v_k\| \|Dv_k\| \, , \]
%Korzystając z faktu, że
and from $\|Av\| \leq \|A\|\|v\|$ we get:
\[x^2 \|v_k\|^2 + k^2 y^2 \|v_k\|^2 \leq (x\|aB\| + |k|y\|aC\| + \|D\|)\|v_k\|^2 \, .\]
Hence
\[ x^2 + k^2 y^2 \leq x \|aB\| + |k| y \|aC\| + \|D\| \]
%Można tę nierówność przekształcić do postaci:
or in an equivalent form:
\[ \left(x-\frac{\|aB\|}{2}\right)^2 + \left(|k|y - \frac{\|aC\|}{2}\right)^2 \leq \|D\| + \frac{\|aB\|^2 + \|aC\|^2}{4} \,.\]
%Ponieważ
Positivity of $(x-\frac{\|aB\|}{2})^2$ gives
\[\left(|k|y - \frac{\|aC\|}{2}\right)^2 \leq \|D\| + \frac{\|aB\|^2 + \|aC\|^2}{4}\]
and
\[ |k| \leq \frac{\|aC\| + \sqrt{4\|D\| + \|aB\|^2 + \|aC\|^2} }{2y} \, .\]
Definition of $y = \frac{\|\frac{1}{a}v_k\|}{\|v_k\|}$ and $a \leq \sqrt{2}$ gives $y \geq \frac{1}{\sqrt{2}}$, hence
\[ |k| \leq \frac{\|aC\| + \sqrt{4\|D\| + \|aB\|^2 + \|aC\|^2} }{\sqrt{2}} \, . \]
%Można dość łatwo obliczyć, że
A simple computation gives $\|D\| = 8$, $\|aB\| = \sqrt{6}$, $\|aC\| = 3\sqrt{2}$. Finally
\[ |k| \leq \frac{3\sqrt{2} + \sqrt{32 + 6 + 18}}{\sqrt{2}} = 3 + \sqrt{28} \approx 8.29 \, ,\]
but $k$ is integer hence $|k| \leq 8$.
\end{proof}

\section{Conformal vector field for extremal Kerr}\label{cvf}
%\section{Pole konforemne na ekstremalnym horyzoncie Kerra}

%Szukamy pola, które zeruje się tylko w jednym punkcie. Przez analogię do zwykłej sfery, spodziewamy się, że będzie ono %postaci:
We are looking for a vector field $X$ in the following form:
\[ X = A(x) \cos\phi \partial_x + B(x) \sin\phi \partial_\phi \, .\]
%We współrzędnych $(x,\phi)$ metryka na horyzoncie ma postać:
In coordinates $(x,\phi)$ the metric tensor
$\displaystyle (g_{AB}) = \left(\begin{array}{cc}
m^2 \frac{1+x^2}{1-x^2} & 0 \\
0 & 4m^2 \frac{1-x^2}{1+x^2}
\end{array}\right)$,
hence
%Zatem:
\[X_x = Am^2\frac{1+x^2}{1-x^2}\cos\phi \, ,\quad X_\phi = 4Bm^2\frac{1-x^2}{1+x^2}\sin\phi \, .\]

%Pola konforemne spełniają równanie:
The CVF equation
\[ \nabla_A X_B + \nabla_B X_A = \nabla_C X^C g_{AB} \]
%W przypadku założonego pola $X$:
applied to our field $X$ reduces to
%Pozostałe części równania:
\[\nabla_x X_x = \left( \frac{1+x^2}{1-x^2}A' + \frac{2x}{(1-x^2)^2}A \right) m^2\cos\phi \, ,\]
\[\nabla_\phi X_\phi = 4m^2 \left(\frac{1-x}{1+x^2}B - \frac{2x}{(1+x^2)^2}A\right) \cos\phi \, , \]
\[\nabla_\phi X_x + \nabla_x X_\phi = \left(4\frac{1-x^2}{1+x^2}B' - \frac{1+x^2}{1-x^2}A\right) m^2 \sin\phi \, , \]
\[ \nabla_C X^C = (A'+B)\cos\phi \, ,\]
where $A' := \frac{dA}{dx}$.
%Równania dla pola konforemnego przyjmują postać:
They can be written in an equivalent form:
\[\left( \frac{1+x^2}{1-x^2}A' + \frac{2x}{(1-x^2)^2}A \right) m^2\cos\phi = \frac{1}{2}m^2\frac{1+x^2}{1-x^2}(A'+B)\cos\phi \, , \]
\[4m^2 \left(\frac{1-x}{1+x^2}B - \frac{2x}{(1+x^2)^2}A\right) \cos\phi = 2m^2\frac{1-x^2}{1+x^2}(A'+B)\cos\phi \, , \]
\[4\frac{1-x^2}{1+x^2}B' - \frac{1+x^2}{1-x^2}A = 0 \, , \]
and finally we obtain system of ODE's:
%Po dalszych uproszczeniach:
\[B = A' + \frac{4x}{1-x^4}A \, ,\]
\[4B'\frac{1-x^2}{1+x^2} = A\frac{1+x^2}{1-x^2} \, ,\]
which leads to the second order ODE for the function $B$:
%Wyliczenie $A$ z drugiego równania i wstawienie wyniku do pierwszego daje:
\[\left[\partial_x^2 - \frac{4x}{1-x^4}\partial_x - \frac{1}{4}\left(\frac{1+x^2}{1-x^2}\right)^2\right] B(x) = 0\]
and
$A(x) = 4\left(\frac{1-x^2}{1+x^2}\right)^2 B'(x)$.
We get the following solution:
%Rozwiązaniem jest:
\[B(x) = C_1 \cosh\left[\frac{1}{2}\left(x+\log\frac{1-x}{1+x}\right)\right] + C_2 \sinh\left[\frac{1}{2}\left(x+\log\frac{1-x}{1+x}\right)\right] \, . \]
If we assume that the field $X$ vanishes at one ``pole'' ($x=\pm 1$) we obtain the relation for constants $C_i$:
%Warunek zerowania się pola w jednym z biegunów ($x=\pm 1$) prowadzi do wyboru stałych
$C_2 = \pm C_1$. For $C_2 = -C_1$ we have:
\[B(x) = C\sqrt{\frac{1+x}{1-x}}e^{-\frac{1}{2}x} \, , \quad
 A(x) = 2C\sqrt{\frac{1+x}{1-x}}\frac{1-x^2}{1+x^2}e^{-\frac{1}{2}x}\, .\]

%Stąd szukane pole wektorowe to:
Finally the CVF $X$ takes the form:
\[X = C\sqrt{\frac{1+x}{1-x}}e^{-\frac{1}{2}x}\left(2\frac{1-x^2}{1+x^2}\cos\phi\partial_x +
\sin\phi \partial_\phi\right) \, .\]

\section*{Acknowledgements}
This research was supported by Polish Ministry of Science and Higher
Education grant Nr N N201 372736.
%This research was supported by \ldots Nr N N201 372736.

\end{document}